\newtheoremstyle{myproofstyle}
{\topsep}   
{\topsep}   
{\itshape}  
{0pt}       
{\bfseries} 
{.}         
{5pt plus 1pt minus 1pt} 
{}          
\theoremstyle{myproofstyle}
\begin{document}

\title{Online Optimal Parameter Compensation method of High-dimensional PID Controller for Robust stability}

\author{Sheng Zimao\aref{amss},
        Hong'an Yang\aref{amss}
        }

\affiliation[amss]{School of Mechanical Engineering, Northwestern Polytechnical University,Shaanxi, 710072, P.~R.~China
        \email{hpshengzimao@mail.nwpu.edu.cn}}

\maketitle

\begin{abstract}

Classical PID control is widely applied in an engineering system, with parameter regulation relying on a method like Trial - Error Tuning or the Ziegler - Nichols rule, mainly for a Single - Input Single - Output (SISO) system. However, the industrial nonlinear Multiple - Input Multiple - Output (MIMO) system demands a high - robustness PID controller due to strong state coupling, external disturbances, and faults. Existing research on PID parameter regulation for a nonlinear uncertain MIMO system has a significant drawback: it's limited to a specific system type, the control mechanism for a MIMO nonlinear system under disturbances is unclear, the MIMO PID controller over - relies on decoupled control, and lacks dynamic parameter compensation. This paper theoretically analyzes a high - dimensional PID controller for a disturbed nonlinear MIMO system, providing a condition for online dynamic parameter regulation to ensure robust stability. By transforming the parameter regulation into a two - stage minimum eigenvalue problem (EVP) solvable via the interior point method, it enables efficient online tuning. The experiment proves that the designed dynamic compensation algorithm can achieve online robust stability of system errors considering multi - channel input coupling, addressing the key limitation in the field. 
 
\end{abstract}

\keywords{Disturbed nonlinear MIMO system, dynamic compensation, robust stability, minimum eigenvalue problem}

\footnotetext{This work is supported in part by the National Natural Science Foundation of China under Grant No.2017KA050099.}

\section{Introduction}
Classical proportional - integral - derivative (PID) control is the most widely used feedback - based control algorithm, applied in over 95$\%$ of control loops in engineering control systems \cite {bib:Åström}, which holds an irreplaceable role \cite {bib:Samad} because of its strong robustness against uncertainties.

Currently, the parameter adjustment of PID controllers predominantly depends on methods like the Trial and Error Tuning approach\cite{bib:O'dwyer} and the Ziegler - Nichols rules\cite{nie2022unifying}. In the context of the robust PID control parameter tuning problem, the frequency - domain internal model control (IMC) \cite{bib:Vilanova}, linear matrix inequality (LMI) \cite{bib:Ge}, etc\cite{bib:Killingsworth}\cite{bib:Verma}., are most frequently applied in the effective stabilization process of control error. The majority of the aforementioned methods are designed for perturbed linear or affine nonlinear Single - Input Single - Output system (SISO). However, most of the current industrial systems exhibit characteristics such as strong coupling of state variables, strong external disturbances, undesired sensor faults and actuator faults. These features pose higher requirements for the design of a high-robustness PID controller with MIMO, nonlinearity, and anti-disturbance capabilities. 

In an effort to investigate the PID parameter regulation problem of nonlinear uncertain Multiple - Input Multiple - Output system (MIMO) system, a substantial number of theoretical outcomes have been obtained\cite{bib:Zhao}\cite{song2018control}. Jinke\cite{bib:Zhang} offers an explicit method to construct PID parameters using the upper bounds of derivatives of unknown nonlinear drift and diffusion terms in a stochastic mechanic system with whitenoise and linear inputs. Both Cheng\cite{zhao2017pid} and Guo\cite{bib:Zhao} theoretically derived sufficient conditions for ensuring the effective error stabilization of PID controllers. Specifically, the controller parameters are constrained within a range associated with the Lipschitz constant of the map of the nonlinear system. Nevertheless, the acquisition of such a Lipschitz constant is typically challenging, thus impeding its practical implementation. Cheng\cite{bib:Zhao2} devised a PID controller by factoring in the high - order differential terms of the error. From a theoretical perspective, a set of parameter tuning rules were formulated for SISO affine - nonlinear uncertain systems, which can guarantee that the error converges exponentially to zero. To secure more optimal outcomes in practical deployment, numerous scholars have embarked on painstaking endeavors. Among them, a representative work is that of Song \cite{bib:Son}. For a nonlinear SISO system under the consideration of the first - order plus time delay (FOPTD) model, Song aimed to minimize three performance indices and presented a sufficient condition in the frequency domain for the parameters of a nonlinear PID controller, which can ensure the absolute stability of the system error. Most existing studies exhibit the following issues. First, the majority of current PID controllers are predicated on specific types of systems with disturbances, such as linear systems, affine nonlinear systems, and SISO systems. Designing controllers applicable to MIMO nonlinear systems in a general form and with more general disturbances remains a challenge. Second, the control mechanism by which a PID controller can control MIMO nonlinear systems with disturbances remains unclear, impeding the implementation of an online regulation method that can effectively guarantee robust stability. Additionally, most existing MIMO controllers rely on decoupled control, making it difficult to account for the interactions among multiple input channels in a coupled manner to achieve optimal parameter configuration. Moreover, during the control process, the gains of the PID controller remain constant, rendering it arduous to achieve automatic dynamic compensation of the controller coefficients in response to sudden environmental changes and disturbances. To address these problems, we theoretically analyzed the control mechanism of high - dimensional PID controller applied to disturbed MIMO system and provided sufficient conditions for online dynamic regulation of controller parameters to ensure the robust stability of the system. To minimize the invariant set region for the robust stability of the system, we transformed the controller parameter regulation problem into a two - stage minimum eigenvalue problem (EVP), which can be solved online using the interior point method. Experimental results demonstrate that the dynamic compensation algorithm we designed can effectively achieve online robust stability of system errors while considering the coupling of multi - channel input commands. 

The structure of this paper is as follows. Section 1 is the introduction, which presents the research progress at home and abroad regarding online regulation of PID controller parameters for MIMO systems and the contributions of this paper. Section 2 introduces basic definitions and lemmas, provides the velocity form of high - dimensional PID controllers applied to general disturbed MIMO systems, and clarifies the fundamental framework of the research problem. Section 3 details the design strategy of high - dimensional PID controllers with dynamic parameter regulation. Section 4 designs specific experiments to validate the proposed algorithm. Section 5 summarizes the above content.

\section{Problem Formulation}
\label{sec:formulation}
\subsection{Notations}
Denote $\mathbb{R}^n$ as the n-dimensional Euclidean space, $\mathbb{R}^{m\times n}$ as the space of $m\times n$ real matrices, $||x||_2 = \sqrt{x^Tx}$ as the Euclidean 2-norm of a vector x. The norm of a matrix $P\in \mathbb{R}^{m\times n}$ is defined by $||P||_2=\sup_{x\in R^n,||x||_2=1}||Px||_2=\sqrt{\lambda_{\max}(P^TP)}$, for given matrix set $\mathbb{P}$, its 2-norm is defined as $||\mathbb{P}||_2=\arg\sup_{P\in\mathbb{P}}||P||_2$. We denote $Res(J)$ as the real part of the eigenvalues associated with matrix $J$, besides $\lambda_{\min}(S)$ and $\lambda_{\max}(S)$ as the smallest and the largest eigenvalues of $S$,respectively. For a function $\Phi =(\Phi_1,\Phi_2,...\Phi_n)^T \in \mathbb{R}^n$,$x=(x_1,x_2,...x_m)^T\in \mathbb{R}^m$, let $\frac{\partial \Phi}{\partial x^T} = (\frac{\partial \Phi_i(x)}{\partial x_j})_{ij}$. Matrix $A<B$ means $A-B$ is negative define matrix. 
\subsection{Definition and Lemma}
\label{sec:definition and lemma}
\newtheorem{definition}{Definition}
\begin{definition}
	\label{def:def1}
	(Robust stability\cite{yoszawa1962stability}) For perturbed nonlinear system $\dot{y} = f(t,y) + g(t,y)$, $y \in \mathbb{R}^n$,$f(t,0)=0$, $f,g \in C[I\times S_H,\mathbb{R}^n]$, $S_H=\{x|||x||\leq H\}$, if $\forall \varepsilon >0$, $\exists \delta_1(\varepsilon) >0$ and $\delta_2(\varepsilon) >0$ to make $||g(t,y)||\leq \delta_1(\varepsilon)$, $||y(0)||\leq \delta_2(\varepsilon)$ such that $||y(t)||<\varepsilon$, then the trivial solution of $\dot{y} = f(t,y) + g(t,y),f(t,0)=0$ exhibits robust stability.
\end{definition}
The concept of robust stability is defined in a way that it describes a property. Specifically, for any perturbed system, through appropriately adjusting the restricted range of the perturbation \(g(t,y)\) and the initial values of the trivial solution \(y(0)\), the perturbed solution can be maintained within an arbitrarily small interval. 

\newtheorem{lem}{Lemma}
\begin{lem}
	\label{thm:lem}
	If there exists Lyapunov function $V(e)=e^TPe$, $P=P^T>0$ for automous system $\dot{e}=f(e)$ to have $\dot{V}(e)\leq -\alpha V(e)$, $\alpha >0$, then $e$ will be exponentially converge to 0 and we define its convergency rate is $\alpha$.
\end{lem}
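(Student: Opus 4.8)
The plan is to reduce the matrix Lyapunov inequality to a scalar differential inequality and then apply the comparison lemma (Gr\"onwall's inequality). First I would fix an arbitrary solution $e(t)$ of $\dot{e}=f(e)$ and set $v(t):=V(e(t))=e(t)^TPe(t)\ge 0$. Since $P=P^T>0$, $V$ is radially unbounded and the sublevel set $\{e:V(e)\le V(e(0))\}$ is compact; because $\dot{V}\le-\alpha V\le 0$ along trajectories, $e(t)$ never leaves this compact set, so the solution is defined for all $t\ge 0$ and the hypothesis $\dot{v}(t)\le-\alpha v(t)$ holds on $[0,\infty)$.

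Next I would integrate the inequality. The cleanest route is to observe that
\begin{equation}
\frac{d}{dt}\!\left(e^{\alpha t}v(t)\right)=e^{\alpha t}\bigl(\dot{v}(t)+\alpha v(t)\bigr)\le 0,
\end{equation}
so $t\mapsto e^{\alpha t}v(t)$ is nonincreasing, giving $v(t)\le v(0)\,e^{-\alpha t}$ for all $t\ge 0$. Then I would convert this energy decay into a norm bound via the Rayleigh-quotient estimates $\lambda_{\min}(P)\|e\|_2^2\le V(e)\le\lambda_{\max}(P)\|e\|_2^2$, namely
\begin{equation}
\lambda_{\min}(P)\,\|e(t)\|_2^2\le v(t)\le v(0)\,e^{-\alpha t}\le\lambda_{\max}(P)\,\|e(0)\|_2^2\,e^{-\alpha t},
\end{equation}
whence
\begin{equation}
\|e(t)\|_2\le\sqrt{\frac{\lambda_{\max}(P)}{\lambda_{\min}(P)}}\;\|e(0)\|_2\;e^{-\alpha t/2}.
\end{equation}
This is an exponential estimate, so $e(t)\to 0$ as $t\to\infty$; the exponential rate carried by the Lyapunov function $V$ is exactly $\alpha$, which is the quantity the statement designates as the convergence rate (the norm $\|e\|_2$ itself decays at rate $\alpha/2$, as expected since $V$ is quadratic in $e$).

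There is no serious obstacle here: the argument is essentially the standard proof that a quadratic Lyapunov function with an exponential decay rate certifies exponential stability. The only points requiring a little care are (i) forward completeness of the solution, so that the comparison inequality may legitimately be integrated over all of $[0,\infty)$ — handled by the compactness/invariance remark above — and (ii) if one does not want to presuppose that $v$ is differentiable along trajectories for nonsmooth $f$, replacing $\dot{v}$ by the upper-right Dini derivative $D^+v$, for which the comparison lemma still yields $v(t)\le v(0)e^{-\alpha t}$.
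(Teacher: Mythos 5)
Your proof is correct and follows essentially the same route as the paper: integrate the scalar inequality (the paper via a variation-of-constants formula with the slack term $Z(t)$, you via the equivalent integrating-factor computation) to get $V(e(t))\leq V(e(0))e^{-\alpha t}$, then apply the Rayleigh-quotient bound $\lambda_{\min}(P)\|e\|_2^2\leq V(e)$ to conclude exponential decay of $\|e(t)\|_2$ at rate $\alpha/2$. Your added remarks on forward completeness and Dini derivatives are careful extras the paper omits, but the substance is identical.
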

\begin{proof}
	Let $Z(t) = Z(e(t))=\dot{V}(e(t)) + \alpha V(e(t))\leq 0$, then
	\begin{equation}
		\begin{split}
			V(e(t))&=V(e(0))\exp(-\alpha t) + \int_0^t Z(r)\exp(-\alpha (t-r))dr\\
			&\leq V(e(0))\exp(-\alpha t) 
		\end{split}
	\end{equation}
	It is obvious that  
	\begin{equation}
			||e(t)||_2 
			\leq \sqrt{\frac{V(e(t))}{\lambda_{\min}(P)}}	
			\leq \sqrt{\frac{V(e(0))}{\lambda_{\min}(P)}} \exp(-\frac{\alpha}{2} t)	
	\end{equation}
	 The proof is completed.
\end{proof}


\newtheorem{lem2}[lem]{Lemma}
\begin{lem2}
	\label{thm:lem2}
	(The Bounded Real Lemma\cite{doi:10.1049/ip-cta:19982048}) Let $G$ be a linear time - invariant state-space model 
	\begin{equation}
		G:\begin{cases}
			\dot{x} = Ax + Bu\\
			y = Cx + Du	
		\end{cases}
	\end{equation}
	where $x\in \mathbb{R}^{n}$, $u\in \mathbb{R}^{m}$, $A\in \mathbb{R}^{n\times n}$,$B\in \mathbb{R}^{n\times m}$,$C\in \mathbb{R}^{p\times n}$, and $D \in \mathbb{R}^{p \times m}$. Then the following are equivalent.
	\begin{itemize}
		\item $||G||_{H_{\infty}}\leq \gamma$, $\gamma >0$.
		 \item there exists $X=X^T>0$ such that
		 \begin{equation}
		 	\begin{bmatrix}
		 		X A+A^T X & XB & C^T\\
		 		B^T X & -\gamma I & D^T \\
		 		C & D & -\gamma I \\
		 	\end{bmatrix} <0
		 \end{equation}.
	\end{itemize}
\end{lem2}

\newtheorem{lem1}[lem]{Lemma}
\begin{lem1}
	\label{thm:lem1}
		For a nonlinear system $\dot{x}(t) =f(x(t))$ with initial state $x(0)=x_0,f(0)=0$, $x\in \mathbb{R}^n$, suppose there is a positive definite function $V(x)$ for any nonzero $x$ satisfying
		\begin{equation}
			\label{equ:lem1_equ}
			\dot{V}(x) + \alpha V(x) - \beta V^{\frac{1}{2}}(x)  \leq 0
		\end{equation}
		where $\alpha >0$, $\beta >0$. Then for any state $x(t)$ there exists
		\begin{equation}
			\label{equ:lem2_equ}
			V^{\frac{1}{2}}(x(t)) \leq \frac{\beta}{\alpha} + (V^{\frac{1}{2}}(x_0) - \frac{\beta}{\alpha})e^{-\frac{\alpha}{2}t}
		\end{equation}
\end{lem1}

\begin{proof}
	We define \(s(t) = V^{\frac{1}{2}}(x)\), and then Eq.(\ref{equ:lem1_equ}) can be transformed into
	\begin{equation}
		\dot{s}(t) + \frac{\alpha}{2}s(t) \leq \frac{\beta}{2}
	\end{equation}
	The above equation can be further transformed into 
	\begin{equation}
		e^{\frac{\alpha}{2}t}\dot{s}(t) + \frac{\alpha}{2}e^{\frac{\alpha}{2}t}s(t) \leq \frac{\beta}{2}e^{\frac{\alpha}{2}t}
	\end{equation}
	which means
	\begin{equation}
		\frac{d}{dt}(e^{\frac{\alpha}{2}t}s(t)) \leq \frac{\beta}{2} e^{\frac{\alpha}{2}t}
	\end{equation}
	Integrating both sides of the inequality in the above equation from $0$ to $t$ simultaneously, we can obtain
	\begin{equation}
		e^{\frac{\alpha}{2}t}s(t) - s(0) \leq \frac{\beta}{\alpha} (e^{\frac{\alpha}{2}t} - 1)
	\end{equation}
	Therefore, the above equation can be further derived to obtain Eq.(\ref{equ:lem2_equ}). Thus, the proof is completed. 
\end{proof}

\subsection{High-dimensional PID controller}
Consider the following class of autonomous MIMO non-affine uncertain nonlinear systems within continuous and first-order differentiable $f \in \mathbb{R}^n$ and first-order differentiable disturbance $d \in \mathbb{R}^n$:
\begin{equation}
	\label{equ:class_equation}
	\dot{x} = f(x,u) + d
\end{equation}
where $x\in \mathbb{R}^n$ that an be measured by state estimator such as extended kalman filter, $u\in \mathbb{R}^m$ is the control input. Our control objective is to robust stabilize the above system and to make the controlled variable $x(t)$ converge to a desired reference value $x_r\in \mathbb{R}^n$ as much as possible, where signal tracking error $e=x_r - x$, for all initial states under the uncertainty first-order differentiable bounded disturbance as
\begin{equation}
	||\dot{d}||_2\leq L_{\dot{d}}
\end{equation}
We adopts multi-channel coupling parameters PID controller as
\begin{equation}
	\label{equ:control_strategy}
	\dot{u} = K_ie + K_p\dot{e} + K_d\ddot{e}
\end{equation}
and tuned PID controller parameters $K_p,K_i,K_d\in \mathbb{R}^{m\times n}$. 
Here, the control commands and its first-order derivative quantity are constrained as
\begin{equation}
	\label{equ:commands_constrain}
	u_{\min}\leq u \leq u_{\max},  \dot{u}_{\min} \leq \dot{u} \leq \dot{u}_{\max}
\end{equation}
where 
$u_{\min},u_{\max},\dot{u}_{\min},\dot{u}_{\max},\in \mathbb{R}^{m}$. 
We expect to exponentially stablize the Eq.(\ref{equ:class_equation}) within maximum stablization speed to resist the emergency situation. 

\subsection{The velocity representation of control system}
The use of the velocity representation of the nonlinear autonomous model Eq.(\ref{equ:class_equation}) makes it possible to obtain a model in the form\cite{bib:Leith}
\begin{equation}
	\label{equ:velocity_representation}
	\ddot{x} = \frac{\partial f}{\partial x^T}\dot{x} + \frac{\partial f}{\partial u^T}\dot{u}	+\dot{d}
\end{equation}
where $\frac{\partial f}{\partial x^T}$ and $\frac{\partial f}{\partial u^T}$ are the Jacobian of $f(x,u)$ respectively with respect to $x$ and $u$. Assuming the full-state reference signal is $x_r$, related second-order following error dynamic will be in the form:
\begin{equation}
	\label{equ:error_equation}
	\ddot{e} = \frac{\partial f}{\partial x^T}\dot{e} - \frac{\partial f}{\partial u^T}\dot{u} + d_e\\
\end{equation}
where 
\begin{equation}
	d_e = \ddot{x}_r - \frac{\partial f}{\partial x^T}\dot{x}_r - \dot{d}
\end{equation}
As the full-state high-dimension feedback PID control strategy Eq.(\ref{equ:control_strategy}) is applied to Eq.(\ref{equ:error_equation}), an error linear time-varying(LTV) state-space dynamic representation with perturbation can be given as follows
\begin{equation}
	\label{equ:error_strategy}	
	\begin{split}
		\begin{bmatrix}
			\ddot{e} \\ \dot{e}
		\end{bmatrix}
		&=
		\begin{bmatrix}
			I(K_d)^{-1}(\frac{\partial f}{\partial x^T} - \frac{\partial f}{\partial u^T}K_p) & -I(K_d)^{-1}\frac{\partial f}{\partial u^T}K_i \\
			I & O
		\end{bmatrix}
		\begin{bmatrix}
			\dot{e} \\ e
		\end{bmatrix}\\
		&+ 
		\begin{bmatrix}
			I(K_d)^{-1} d_e \\ 0
		\end{bmatrix}
	\end{split}
\end{equation}
where $I(K_d) = I+\frac{\partial f}{\partial u^T}K_d$, denote $\tilde{e}=(\dot{e}^T,e^T)^T$, $\tilde{d}=((I(K_d)^{-1}d)^T,0^T)^T$, we define the $K$ as
\begin{equation}
	K=(K_p,K_i)
\end{equation}
Thus Eq.(\ref{equ:error_strategy}) can be organized into
\begin{equation}
	\label{equ:simple_J}
	\dot{\tilde{e}} = \tilde{J}_K\tilde{e} + \tilde{d} \\
\end{equation}
where 
\begin{equation}
	\label{equ:tilde_J_K}
	\begin{split}
		\tilde{J}_K &= L_1 + L_2K\\
		L_1 = \begin{bmatrix}
			I(K_d)^{-1}\frac{\partial f}{\partial x^T} & O \\
			I & O
		\end{bmatrix} &,
		L_2 = \begin{bmatrix}
			-I(K_d)^{-1}\frac{\partial f}{\partial u^T} \\ O
		\end{bmatrix}
	\end{split}
\end{equation}
and the upper bound of $\tilde{d}$ is given by $L_{\tilde{d}}$, defined as
\begin{equation}
	L_{\tilde{d}}\leq||I(K_d)^{-1}d_e||_2\leq ||I(K_d)^{-1}||_2(L_{\dot{d}} + ||\ddot{x}_r - \frac{\partial f}{\partial x^T}\dot{x}_r||_2)
\end{equation}

\section{Dynamic Compensation Strategy of Controller}
\label{sec:Dynamic Compensation Strategy of Controller}
In this section, we aim to study the following question, how to design an appropriate parameter compensation \(\Delta K\), to achieve robust stabilization at the current point, after the parameter \(K\) of the high-dimensional PID controller render the perturbed system Eq.(\ref{equ:simple_J}) asymptotically stable at the origin, thus minimizing the range of the convergence domain under perturbed condition. The following theorem will lay the foundation for achieving this objective. 
\newtheorem{thm1}{Theorem}
\begin{thm1}
	\label{thm:thm1}
	For the time-varying perturbed system \(\dot{e}(t)=(J_K + \Delta J_K)e(t) + d\), $e(0)=e_0$, where \(||d||_2 \leq L_d\) and its compensation item is \(\Delta J_K\). If for a given \(Q = Q^T>0\), there exists \(P = P^T>0\) such that 
	\begin{equation}
		J_K^T P+P J_K\leq -Q
	\end{equation}
	and there exists \(\tau > 0\) for \(\Delta J_K\) such that
	\begin{equation}
		\label{thm:thm1_condition}
		\Delta J_K^TP+P\Delta J_K \leq Q-\tau I
	\end{equation}
	 Then there exists
	 \begin{equation}
	 	\label{thm:thm1_equ1}
	 	\sqrt{e^T(t)Pe(t)}\leq \eta(\tau) + (\sqrt{e^T_0Pe_0}-\eta(\tau))e^{-\frac{\tau }{2\lambda_{\max}(P)}t}
	 \end{equation}
	 where
	 \begin{equation}
	 	\eta(\tau) = \frac{2L_d\lambda_{\max}^2(P)}{\tau \lambda_{\min}^{\frac{1}{2}}(P)}
	 \end{equation}
	 Moreover, $\sqrt{e^T(t)Pe(t)}$ decreases monotonically as time $t$ progresses when $\sqrt{e^T(t)Pe(t)}>\eta(\tau)$.
\end{thm1}

\begin{proof}
	We construct the Lyapunov function \(V(e)=e^TPe,P=P^T>0\), and its first-order derivative is
	\begin{equation}
		\label{equ:thm1_pf1}
		\begin{split}
			\dot{V}(e) &= \dot{e}^TPe + e^TP\dot{e} \\
			& = [(J_K+\Delta J_K)e + d]^TPe + e^TP[(J_K+\Delta J_K)e + d] \\
			& = e^T(J_K^TP+PJ_K)e + e^T(\Delta J_K^TP+P\Delta J_K)e \\
			& + d^TPe + e^TPd\\
			& \leq -e^TQe + e^T(\Delta J_K^TP+P\Delta J_K)e + 2||e^TPd||_2 \\
			& \leq e^T(-Q+\Delta J_K^TP+P\Delta J_K)e + 2L_d\lambda_{\max}(P) ||e||_2\\
			& \leq e^T(-Q+\Delta J_K^TP+P\Delta J_K)e +
			\frac{2L_d\lambda_{\max}(P)}{\sqrt{\lambda_{\min}(P)}}V^{\frac{1}{2}}(e) \\
			& \leq - \frac{\tau}{\lambda_{\max}(P)} V(e) +\frac{2L_d\lambda_{\max}(P)}{\sqrt{\lambda_{\min}(P)}}V^{\frac{1}{2}}(e) \\
		\end{split}
	\end{equation}
	According to Lemma \ref{thm:lem1}, Eq.(\ref{thm:thm1_equ1}) holds. Moreover, when $\sqrt{e^T(t)Pe(t)}>\eta(\tau)$, Eq.(\ref{equ:thm1_pf1}) can be further transformed into $\dot{V}(e)\leq 0$, which simultaneously indicates that $V(e)$ decreases monotonically with time. In other words, $\sqrt{e^T(t)Pe(t)}$ also decreases monotonically with time. The proof is completed.  
\end{proof}
\begin{figure}[hbt!]
	\centering
	\includegraphics[width=\columnwidth]{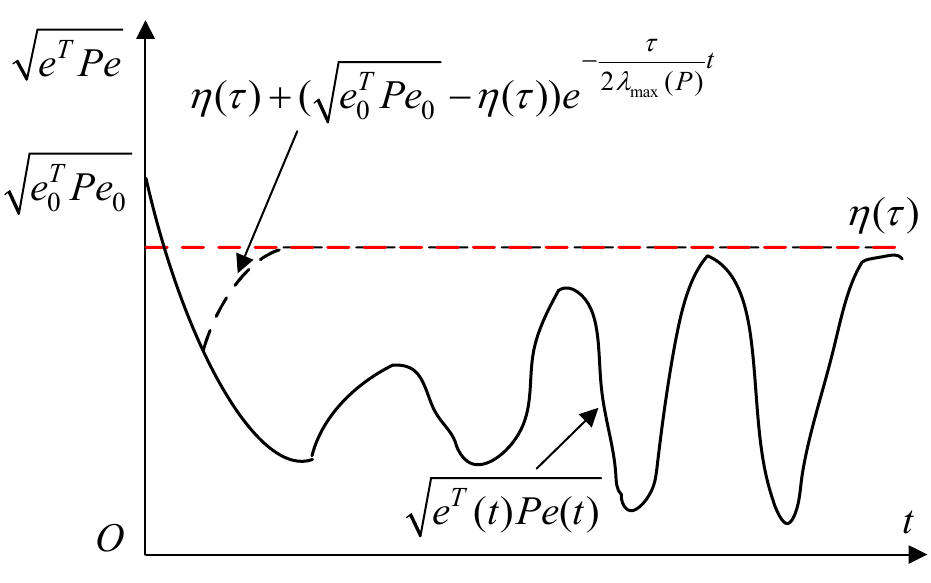}
	\caption{The convergence process of the elliptical form $\sqrt{e^T(t)Pe(t)}$ of error $e$, with respect to time $t$.}
	\label{fig:1}
\end{figure}
Theorem \ref{thm:thm1} elucidates the following principle. As depicted in Fig. \ref{fig:1}, when $K$ and $P$ are specified, for the perturbed system $\dot{e}(t)=(J_K+\Delta J_K)e(t)+d$, through the configuration of the compensation $\Delta J_K^T$ satisfying Eq. (\ref{thm:thm1_condition}), regardless of the initial error value $e_0$, the error $e(t)$ at any subsequent time $t$ can be constrained within a relatively small elliptical region $J(\tau)=\left\{e:\sqrt{e^TPe}\leq\eta(\tau)\right\}$. In light of Definition \ref{def:def1}, the perturbed system demonstrates robust stability within the global domain of attraction. 

\subsection{Optimal Configuration of Parameter \(K\)}
In Eq.(\ref{equ:tilde_J_K}), the value of \(\tilde{J}_K\) at \(\tilde{e} = 0\), namely \(\tilde{J}_K(0)\), is configured using the following approach to ensure the achievement of asymptotic stability at the origin. First, for the given \(Q = Q^T=\varepsilon_Q I>0\) and \(P = P^T=\varepsilon_P I>0\), the stability condition \(\tilde{J}_K(0)^T P + P \tilde{J}_K(0)=-Q\) can be transformed into:
\begin{equation}
	\label{equ:orgin_stable}
	L_1(0) + L_1^T(0) + L_2(0)K+K^TL_2(0)^T+\frac{\varepsilon_Q}{\varepsilon_P} I \leq O
\end{equation}
where \(L_1(0)\) and \(L_2(0)\) denote the values of \(L_1\) and \(L_2\) respectively at the equilibrium point \(\tilde{e} = 0\). Moreover, the convergence rate near the origin is determined by \(\frac{\varepsilon_Q}{\varepsilon_P}\), and the following theorem will illustrate this point. 
\newtheorem{thm2}[thm1]{Theorem}
\begin{thm2}
	For the linear time-invariant(LTI) unperturbed system \(\dot{\tilde{e}}=\tilde{J}(0)\tilde{e}\), if Eq.(\ref{equ:orgin_stable}) is satisfied, the \(\tilde{e}\) will exponentially converge to \(0\) at a convergence rate of \(\frac{\varepsilon_Q}{\varepsilon_P}\). 
\end{thm2}
\begin{proof}
	Eq.(\ref{equ:orgin_stable}) is equivalent to \(\varepsilon_P\tilde{J}_K(0)+\varepsilon_P\tilde{J}_K(0)^T+\varepsilon_Q I\leq0\). Herein, we construct the Lyapunov function \(V(\tilde{e}) = \varepsilon_P\tilde{e}^T\tilde{e}\), and it can be carried out by taking the derivative that $\dot{V}(\tilde{e}) \leq -\frac{\varepsilon_Q}{\varepsilon_P} V(\tilde{e})$. According to Lemma \ref{thm:lem}, the proof is completed. 
\end{proof}
In order to maximize the convergence rate of the system at the origin under unperturbed conditions, Eq.(\ref{equ:orgin_stable}) can be converted into the following eigenvalue problem (EVP) for solution, so as to obtain the most appropriate parameter \(K\).
\begin{equation}
	\label{equ:EVP1}
	\centering
	\begin{split}
		\min_{K} \lambda <0 \\
		L_K = L_1(0) + L_1^T(0) + L_2(0)K+K^TL_2(0)^T \leq \lambda I \\
	\end{split}
\end{equation} 
Furthermore, on the basis of above, in order to ensure the robustness of the system near the origin, the $H_{\infty}$ norm of the system can be restricted as minimum as possible. According to Lemma \ref{thm:lem2}, this problem is further transformed into the following generalized EVP.
\begin{equation}
	\label{equ:EVP1_2}
	\centering
	\begin{split}
		\min_{K} \lambda <0 \\
		\begin{bmatrix}
			L_K & O & \frac{1}{\varepsilon_P} I\\
			* & -\lambda I & O \\
			* & * & -\lambda I \\
		\end{bmatrix} <0 \\
	\end{split}
\end{equation} 

\subsection{Optimal Configuration of Compensation item \(\Delta K\)}
Having already determined the value of \(K\), a crucial problem addressed in this paper is how to optimally configure the current compensation \(\Delta K\). The goal is that the updated parameter \(K'=K + \Delta K\) enables the system described by Eq.(\ref{equ:simple_J}) to achieve fast robust stability under perturbed condition and ensures the minimum range of domain of attraction $J(\tau)$ possible. Compared with the matrix \(\tilde{J}_K(0)\) at the origin, the increment \(\Delta\tilde{J}_{K+\Delta K}(\tilde{e})\) of the compensated matrix \(\tilde{J}_{K + \Delta K}\) at the present error \(\tilde{e}\) can be characterized as follows: 
\begin{equation}
	\begin{split}
		\Delta\tilde{J}_{K+\Delta K} &= \tilde{J}_{K + \Delta K}(\tilde{e}) - \tilde{J}_K(0) \\
		&= \Delta L_1 + \Delta L_2 K + L_2(\tilde{e}) \Delta K
	\end{split}
\end{equation}
where $\Delta L_1 = L_1(\tilde{e}) - L_1(0)$, $\Delta L_2 = L_2(\tilde{e}) - L_2(0)$. Similarly, we set \(Q = Q^T=\varepsilon_Q I>0\) and \(P = P^T=\varepsilon_P I>0\), then there exists the following theorem, which reveals how to configure the compensation item to enable the error \(\tilde{e}\) to achieve fast robust stability.
\newtheorem{thm3}[thm1]{Theorem}
\begin{thm3}
	For the linear time-variant(LTV) perturbed system \(\dot{\tilde{e}}(t)=(J_K + \Delta J_K)\tilde{e}(t) + \tilde{d}\), $\tilde{e}(0)=\tilde{e}_0$, $||\tilde{d}||_2\leq L_{\tilde{d}}$, if Eq.(\ref{equ:orgin_stable}) is satisfied for given $\varepsilon_P>0$, $\varepsilon_Q>0$, and there exists $\tau >0$ such that
	\begin{equation}
		\label{thm:thm3_condition}
		\begin{split}
			L_2(\tilde{e})\Delta K + \Delta K^T L_2^T(\tilde{e}) + \Delta L_2 K + K^T \Delta L_2^T \\
			+ \Delta L_1 + \Delta L_1^T + \frac{\tau - \varepsilon_Q}{\varepsilon_P} I \leq O 
		\end{split}
	\end{equation} 
	Then the perturbed system exhibits robust stability, and $\tilde{e}$ will converge to the invariant set
	\begin{equation}
		\label{equ:invariant set}
		J(\tau)=\left\{e:\sqrt{e^TPe}\leq\eta(\tau) = \frac{2L_{\tilde{d}}\varepsilon_P^{\frac{3}{2}}}{\tau} \right\}
	\end{equation} 
\end{thm3}
\begin{proof}
	Eq. (\ref{thm:thm3_condition}) is equivalent to 
	\begin{equation}
		\Delta \tilde{J}_{K + \Delta K}^TP+P\Delta \tilde{J}_{K + \Delta K} \leq Q-\tau I
	\end{equation}
	where $P = \varepsilon_P I$ and $Q=\varepsilon_Q I$. According to Theorem \ref{thm:thm1}, it can be concluded that the conclusion holds. The proof is completed. 
\end{proof}
In order to minimize the upper bound of the $J(\tau)$ as shown in Eq.(\ref{equ:invariant set}) as much as possible, while fully ensuring the maximization of \(\frac{\varepsilon_Q}{\varepsilon_P}\), it is necessary to simultaneously ensure the maximization of \(\tau\). This objective can also be achieved by optimizing the compensation amount \(\Delta K\) given an appropriate \(K\). Similarly, this problem can be converted into the following EVP.
\begin{equation}
	\label{equ:EVP2}
	\centering
	\begin{split}
		&\min_{\Delta K} \lambda <0 \\
		& \varepsilon_PL_2(\tilde{e})\Delta K + \varepsilon_P\Delta K^T L_2^T(\tilde{e}) + \varepsilon_P\Delta L_2 K + \varepsilon_PK^T\Delta L_2^T \\
	    & + \varepsilon_P\Delta L_1 + \varepsilon_P\Delta L_1^T - \varepsilon_Q I \leq \lambda I
	\end{split}
\end{equation} 
Through the foregoing analysis, we have successfully transformed the online tuning problem of high-dimensional PID controller parameters into the solution of two classes of adjacent eigenvalue problems (EVPs). Initially, to maximize the convergence rate of the controller at the equilibrium point, we determine the optimal \(K\). Subsequently, having obtained \(K\), we maximize the finite-time convergence rate at the current state error to obtain the dynamic compensation item \(\Delta K\) for the parameter \(K\). The adjusted value \(K + \Delta K\) is then applied to the high-dimensional PID controller, which further optimizes the convergence rate, ensuring finite-time stability of the controller in the case where the system is perturbed.

\section{Simulation}
\label{sec:Simulation}
\subsection{Experimental configuration}
In this section, we establish the theoretical validation experiment framework explored in our study. For the sake of simplicity, we adopt the kinematic model of a fixed - wing aircraft \cite{bib:Samir} in the ground coordinate system along the $\gamma$ and $\chi$ directions, subject to perturbations $d_{\chi}$ and $d_{\gamma}$, as the nonlinear controlled model. This model is described by the following equations:
\begin{equation}
	\label{equ:perturbations_model}
	\begin{cases}
		\dot{\chi} = g\tan\phi/V + d_{\chi}\\
		\dot{\gamma} = g(n_z\cos\phi - \cos\gamma)/V + d_{\gamma} 
	\end{cases}
\end{equation}
Let the state vector be denoted as $x = (\chi,\gamma)^T$, the input commands as $u = (\phi,n_z)^T$, the state differential quantity without perturbations as $f = (\dot{\chi} - d_{\chi},\dot{\gamma} - d_{\gamma})^T$. The corresponding Jacobians are as follows:
\begin{equation}
	\label{equ:states_differential}
	\frac{\partial f}{\partial x^T} = \begin{bmatrix}
		0 & 0\\ 0 &  g\sin\gamma/V
	\end{bmatrix},
	\frac{\partial f}{\partial u^T} = \begin{bmatrix}
		g\sec^2\phi/V & 0\\
		-gn_z\sin\phi/V & g\cos\phi/V
	\end{bmatrix}
\end{equation}
Our aim is to configure appropriate high - dimensional PID controller input commands $\phi$ and $n_z$ to achieve the ideal tracking performance of the reference signals $\gamma_c$ and $\chi_c$, assuming that the velocity $V$ is well - maintained. Specifically, we emphasize the stabilization process of the reference errors $e_{\gamma}=\gamma_c - \gamma$, $e_{\chi}=\chi_c - \chi$ and their first - order differential quantities $\dot{e}_{\gamma}$ and $\dot{e}_{\chi}$ under perturbations, which are bounded by the upper - bound constants $L_{d_{\chi}}$ and $L_{d_{\gamma}}$. Here, the perturbation $d$ is modeled as uniform white noise, expressed in the following form:
\begin{equation}
	\label{equ:d}
	d \sim U(-\frac{L_d}{2}\mathbf{1}_n,\frac{L_d}{2}\mathbf{1}_n)
\end{equation}
The critical hyperparameters for this experiment are presented in TABLE \ref{tab:hyperparameters}. It is specified that the initial derivatives of all states are zero. 
\begin{table}[htbp]
	\caption{Hyperparameter declarations}
	\label{tab:hyperparameters}
	\centering
	\begin{tabular}{cccc}
		\hline
		\bf{Declaration} & \bf{Hyperparameter} & \bf{Value} & \bf{Unit}\\ 
		\hline 
		Simulation timespan & $T$ & [0,20] & $s$\\
		Acceleration of gravity & $g$ & 9.81 & $m/s^2$ \\
		Consolidated velocity & $V$ & 25 & $m/s$ \\
		Initial climb angle & $\gamma$ & $\pi/4$& $rad$\\
		Initial azimuth angle & $\chi$ & $\pi/3$& $rad$\\
		Initial roll angle & $\phi$ & $\pi/3$& $rad$\\
		Initial overload & $n_z$ & $1$& $-$\\
		Reference climb angle & $\gamma_c$ & $0$& $rad$\\
		Reference azimuth angle & $\chi_c$ & $0$& $rad$\\
		Reference roll angle & $\phi_c$ & $0$& $rad$\\
		Reference overload & $n_{zc}$ & $0$& $-$\\
		Upper-bound constant of $d_{\chi}$ & $L_{d_{\chi}}$ & $0.5$ & $rad$ \\
		Upper-bound constant of $d_{\gamma}$ & $L_{d_{\gamma}}$ & $0.5$ & $rad$ \\
		Controller parameter & $K_d$ & $O$ & $-$  \\
		\hline 
	\end{tabular}
\end{table}

\subsection{Result Analysis of adding Optimal Compensation}
In this section, we perform a comparative analysis of the disparities in the time - series curves of the state - variable errors \(e_{\gamma}\), \(e_{\chi}\) and the first - order differentials of these errors \(\dot{e}_{\gamma}\), \(\dot{e}_{\chi}\) subsequent to the addition of the compensation terms. Initially, for the present state conditions, we solve the EVP problem of Eq.(\ref{equ:EVP1}) to compute the original high - dimensional PID controller parameters \(K\) as
\begin{equation}
	K_p = \begin{bmatrix}
		1.0159 & 0 \\
		0 & 1.0159
	\end{bmatrix}
	,
	K_i = \begin{bmatrix}
		2.0406 & 0 \\
		0 & 2.0406
	\end{bmatrix}
\end{equation}  
The error curves of the states corresponding to the parameter \(K\) are presented as the green dashed lines in Fig. \ref{fig:2}. Evidently, while the time - series profiles of the error terms \(e_{\gamma}\) and \(\dot{e}_{\gamma}\) can effectively achieve a satisfactory level of stabilization, however, the error in the horizontal direction, \(e_{\chi}\), along with its first - order derivative \(\dot{e}_{\chi}\), exhibit pronounced oscillations. As a consequence, the stabilization performance for these components is rather poor. Meanwhile, we calculate the dynamic compensation terms $\Delta K_p$ and $\Delta K_i$ using the solution of EVP problem Eq.(\ref{equ:EVP2}) as follows
\begin{equation}
		\Delta K_p = \begin{bmatrix}
		-0.3809 & -0.7744 \\
		7.2946 & 4.1368
	\end{bmatrix}
	,
	\Delta K_i = \begin{bmatrix}
		-1.4029 & 0 \\
		1.1045 & 3.0609
	\end{bmatrix}
\end{equation}
\begin{figure}[htbp]
	\centering
	\begin{tabular}{cc}
		\includegraphics[width=0.48\columnwidth]{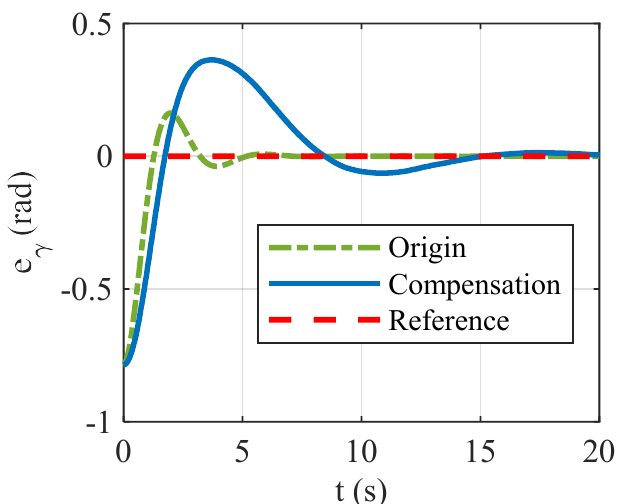} &
		\includegraphics[width=0.48\columnwidth]{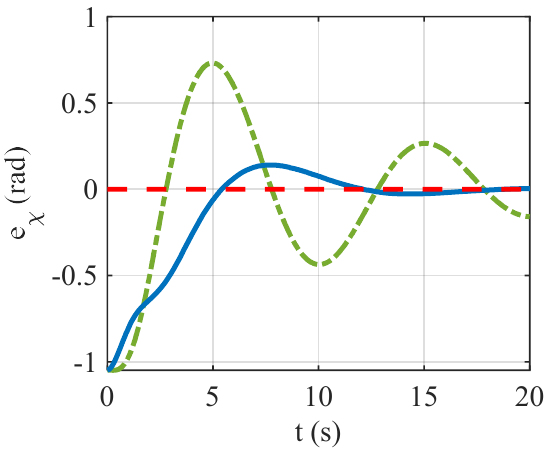} \\
		(a) & (b)\\
		\includegraphics[width=0.48\columnwidth]{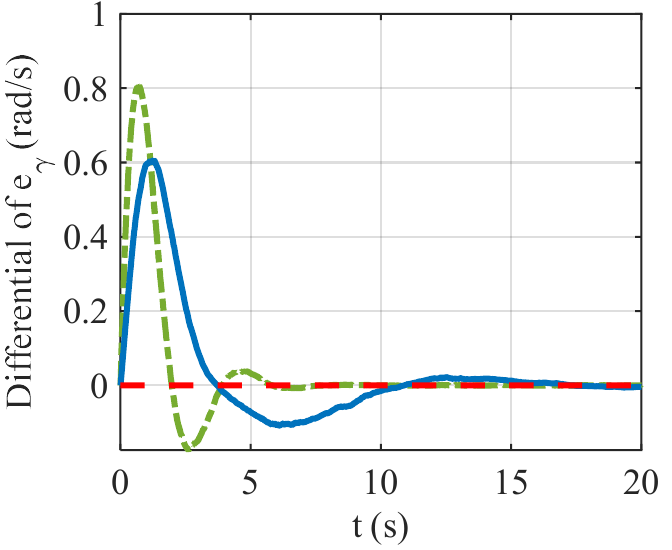} &
		\includegraphics[width=0.48\columnwidth]{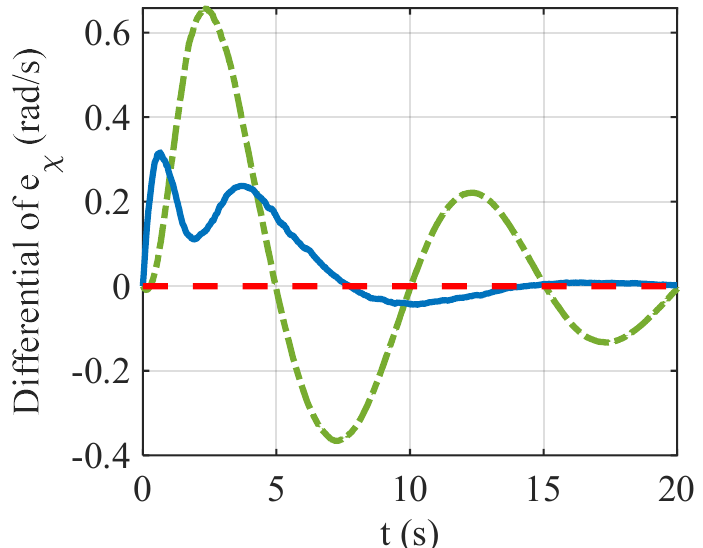} \\
		(c) & (b)
	\end{tabular}
	\caption{Comparison of the time-series profiles of the states $e_{\gamma}$, $e_{\chi}$ and their differentials $\dot{e}_{\gamma}$, $\dot{e}_{\chi}$ before and after adopting the original parameters $K$ and the parameters $K + \Delta K$ after optimal compensation. (a): Comparison of the $e_{\gamma}$; (b): Comparison of the $e_{\chi}$; (c): Comparison of the $\dot{e}_{\gamma}$; (d): Comparison of the $\dot{e}_{\chi}$.}
	\label{fig:2}
\end{figure}
 After applying these to obtain the new controller parameters $K_p+\Delta K_p$ and $K_i+\Delta K_i$ for compensation - based control, the resultant control curves are depicted as the blue solid lines in Fig. \ref{fig:2}. Evidently, while striving to maintain the error - stabilization of the original $e_{\gamma}$ and $\dot{e}_{\gamma}$, the pronounced oscillations of the error terms $e_{\chi}$ and $\dot{e}_{\chi}$ are effectively mitigated. This accomplishment leads to the efficient stabilization of all error - related states, thereby vividly demonstrating the robust stability inherent in the high - dimensional PID controller. 
\begin{figure}[htbp]
 	\centering
 	\includegraphics[width=\columnwidth]{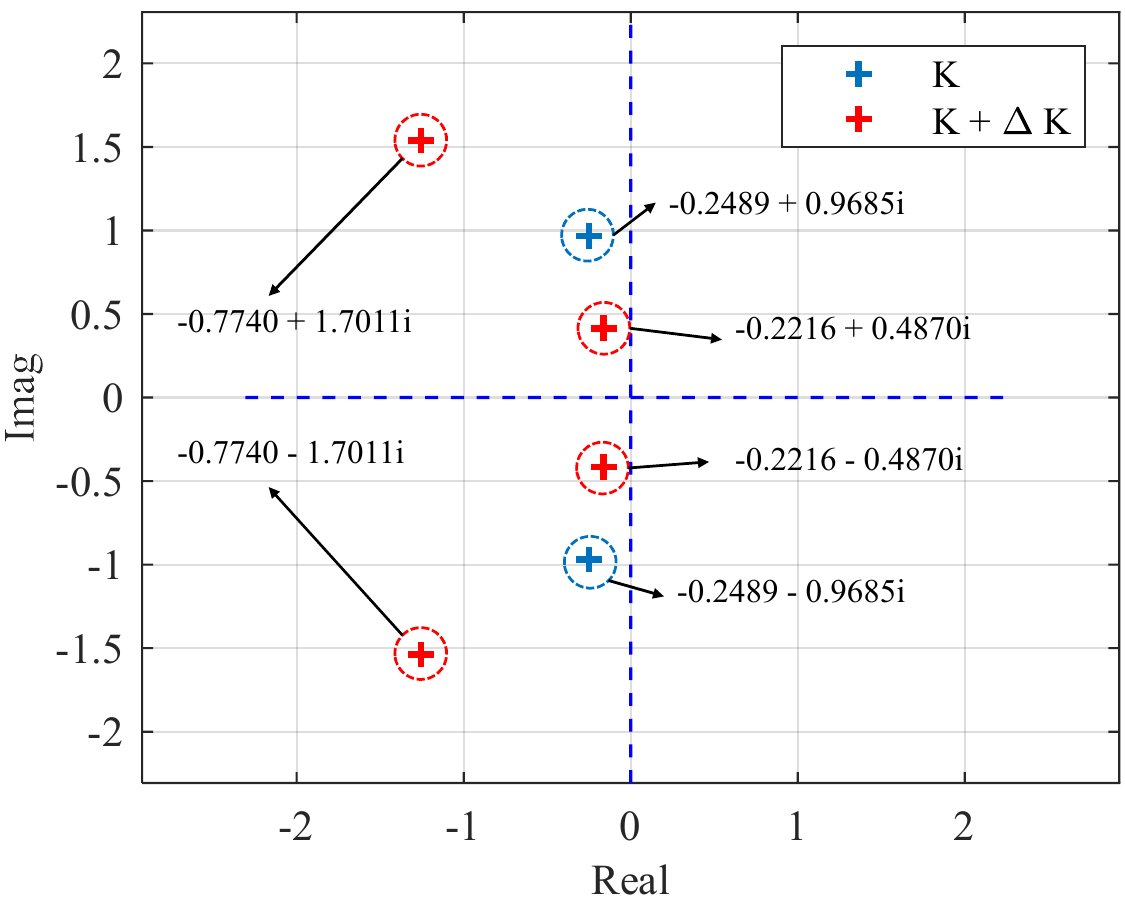}
 	\caption{A comparison is made of the eigenvalue distributions of the matrices before compensation $\tilde{J}_K(0)$ and the matrices after compensation $\tilde{J}_{K+\Delta K}(0)$ at the origin.}
 	\label{fig:3}
 \end{figure}
 Moreover, we carried out an in - depth analysis of the eigenvalue distributions of the state - transition matrices \(\tilde{J}_K\) and \(\tilde{J}_{K + \Delta K}\) at the origin both prior to and subsequent to compensation, as shown in Fig. \ref{fig:3}. The eigenvalue distributions serve as a crucial indicator, effectively reflecting the anti - disturbance capabilities of the state variables in the proximity of the origin. Notably, regardless of whether the compensation is applied or not, all eigenvalues are positioned within the left - hand half of the complex plane. This finding strongly implies that the system maintains certain asymptotic stability properties in the vicinity of the error zero - point even when subjected to external perturbations.
 Specifically, for \(\tilde{J}_K(0)\), it features a pair of conjugate multiple eigenvalues in the left - hand half - plane of the complex plane. In contrast, after compensation, the eigenvalues of \(\tilde{J}_{K+\Delta K}(0)\) display a more dispersed pattern within the left - hand half - plane of the complex plane and are situated at a greater distance from the imaginary axis. This distinct difference potentially accounts for the fact that the controller parameters post - compensation result in smaller oscillation amplitudes and more effective stabilization of the error at the origin.
 
\subsection{Performance Comparison involving Compensation}
We leverage the fundamental metrics of Integral Time Absolute Error (ITAE), Peak Time (PT), and Maximum Overshoot (MO) to quantitatively evaluate the performance of the high-dimensional PID controller under diverse conditions. The ITAE represents the average of the absolute differences between the actual signal and the reference signal, integrated over a specific time period. The PT refers to the time elapsed for a signal to ascend from a predefined initial value to its peak. The MO denotes the maximum extent by which a response surpasses its final value. Through these metrics, we can comprehensively assess the controller's effectiveness and adaptability in different scenarios. 
\begin{figure}[htbp]
	\centering
	\begin{tabular}{cc}
		\includegraphics[width=0.23\textwidth]{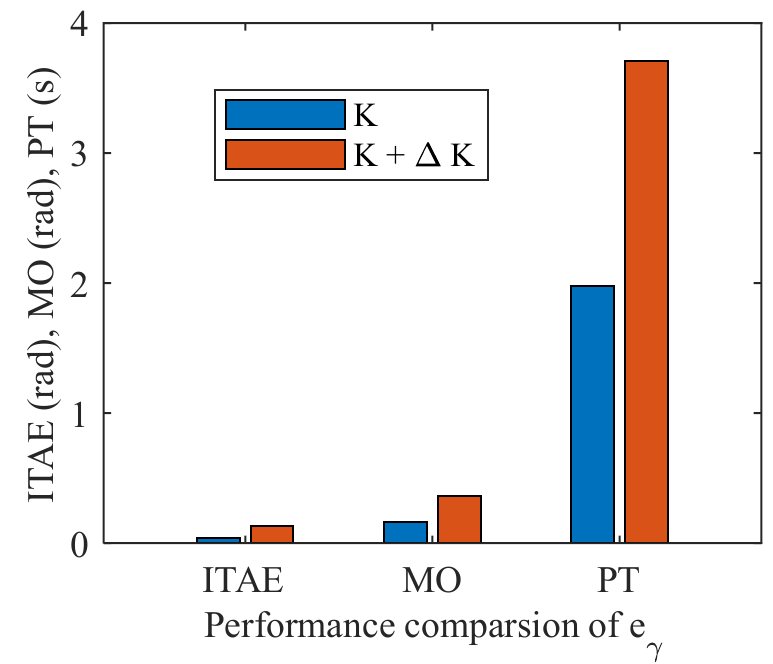} &
		\includegraphics[width=0.23\textwidth]{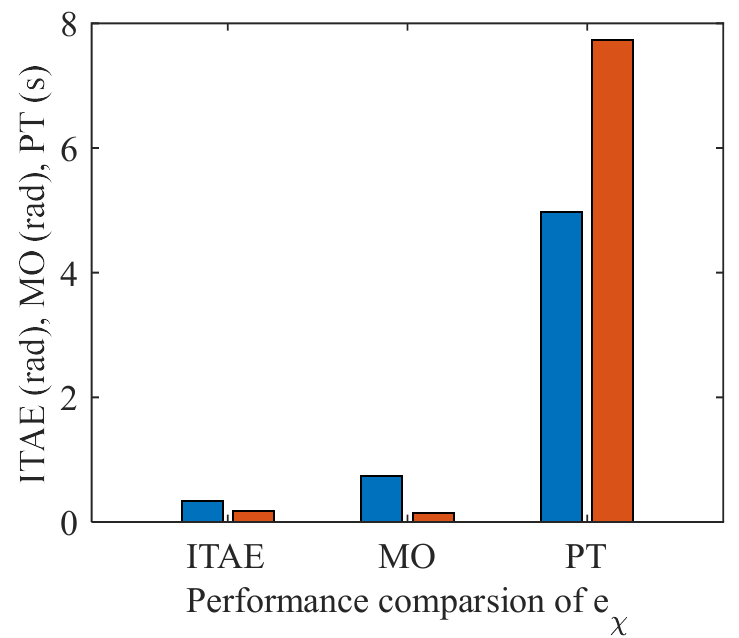} \\
		(a) & (b) \\
		\includegraphics[width=0.23\textwidth]{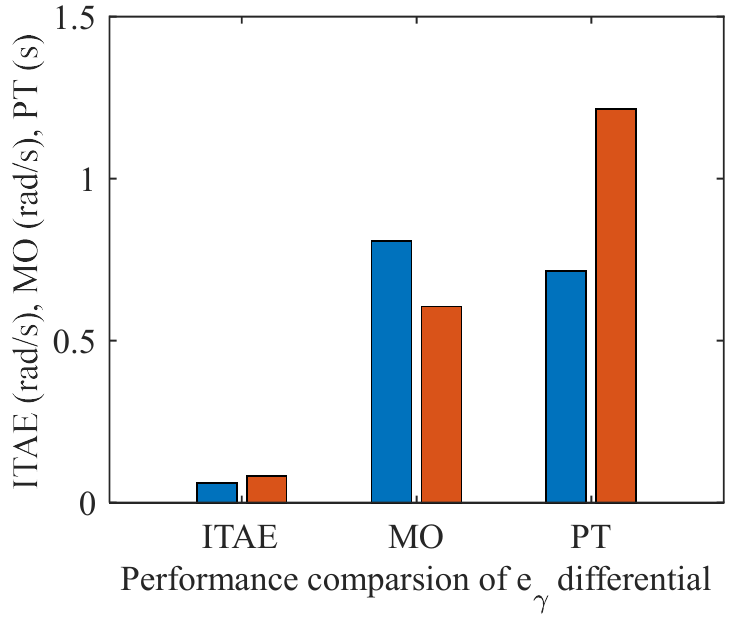} &
		\includegraphics[width=0.23\textwidth]{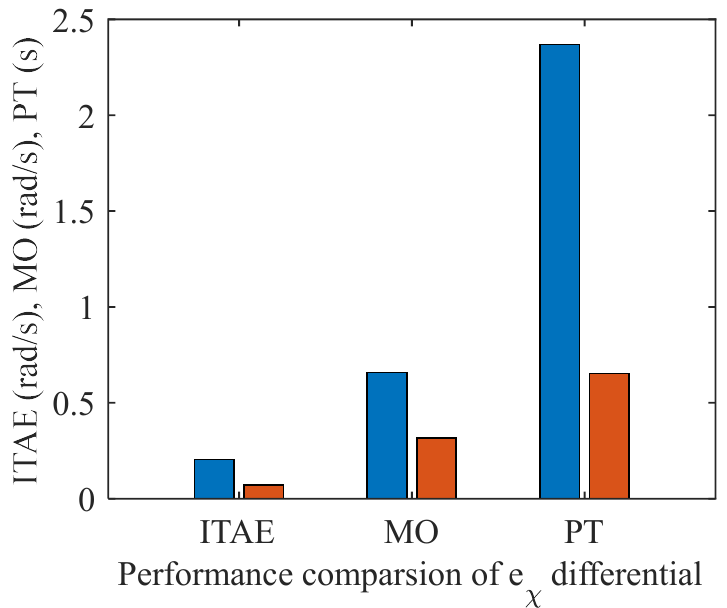} \\
		 (c) & (d) \\
	\end{tabular}
	\caption{Quantitative comparison of differences in high-dimensional PID controller before and after compensation using ITAE, PT, and MO indictors from the perspectives of the states $e_\gamma$, $e_\chi$, $\dot{e}_\gamma$, and $\dot{e}_\chi$. (a): $e_{\gamma}$; (b): $e_{\chi}$; (c): $\dot e_{{\gamma}}$; (d): $\dot e_{{\chi}}$.}
	\label{fig:4}
\end{figure}
We conducted a quantitative comparative analysis of numerous key indicators within the time - series curves presented in Fig. \ref{fig:2}, as depicted in Fig. \ref{fig:4}. Specifically, the high - dimensional PID controller exhibits superior comprehensive performance in the $e_{\chi}$ and $\dot{e}_{\chi}$ dimensions. It attains a lower ITAE, a reduced MO, and a smaller average value of PT. Notably, this enhanced performance in the $e_{\chi}$ and $\dot{e}_{\chi}$ dimensions comes at the expense of a certain degradation in $e_{\gamma}$ and $\dot{e}_{\gamma}$. Nevertheless, such a trade - off is acceptable, as clearly shown in Fig. \ref{fig:2}, even under this scenario, $e_{\gamma}$ and $\dot{e}_{\gamma}$ remain in a convergent state. Considering the key indicators ITAE and MO for both $e_{\gamma}$ and $e_{\chi}$, after compensating the controller parameters, the performance improvement in terms of $e_{\chi}$ is far more significant than the performance degradation in terms of $e_{\gamma}$. Based on the above - mentioned analysis, overall, the controller with parameter compensation indeed achieves a more optimized control effect. 

\section{Conclusion}
This research is dedicated to the online tuning problem of parameters for high - dimensional PID controller in general nonlinear perturbed system. Initiating from the velocity form of the error dynamic system, the error - stabilization problem is formulated as a robust stability problem of a class of linear time - varying perturbed linear system. Regarding this issue, the scope of the invariant set for error convergence under specific conditions is theoretically analyzed. Consequently, an appropriate compensation for the dynamic controller parameters can be devised to minimize the extent of this error - convergence domain. In this procedure, the configuration of controller parameters is decomposed into two phases: the design of exponential - stabilization parameters at the origin and the design of the optimal compensation at the current state. Moreover, it is concretely transformed into the corresponding two - stage eigenvalue problem (EVP) for expeditious solution. Experimental results verify that our online tuning approach can effectively eliminate the oscillation problem during the convergence of state errors, guarantee the asymptotic stability of the error dynamics system at the origin, and realize an enhancement in comprehensive performance, demonstrating excellent robust stability.

\bibliographystyle{unsrt}
\bibliography{ref}

\end{document}